\documentclass{llncs}
\usepackage{makeidx}  
\usepackage{amssymb}
\usepackage{amsmath}
\usepackage{wasysym}
\usepackage{textcomp}
\usepackage{enumerate}
\usepackage{comment}
\newcommand{\dollar}[0]{\$}

\begin{document}

\title{HOMING VECTOR AUTOMATA} 


\author{ \"{O}zlem Salehi \thanks{The first author is partially
supported by T\"{U}B\.{I}TAK (Scientific and Technological Research Council of Turkey).}
 \and Ahmet Celal Cem Say }
\authorrunning{"{O}zlem Salehi, Ahmet Celal Cem Say}

\institute{Bo\v{g}azi\c{c}i University, Department of Computer Engineering, Bebek 34342 Istanbul, Turkey\\
\email{ozlem.salehi@boun.edu.tr},
\email{say@boun.edu.tr},
}

\maketitle              

\begin{abstract}
We introduce homing vector automata, which are finite automata augmented by a vector that is multiplied at each step by a matrix determined by the current transition, and have to return the vector to its original setting in order to accept the input. The computational power of the deterministic, nondeterministic and blind versions of these real-time machines are examined and compared to various related types of automata. A generalized version of the Stern-Brocot encoding method, suitable for representing strings on arbitrary alphabets, is also developed.
\end{abstract}

\section{Introduction}

The idea of augmenting the classical finite automaton model with an external storage unit that can hold unlimited amounts of information, yet can be accessed in a limited mode, is a celebrated topic of automata theory, with pushdown automata \cite{Ch62} and counter machines \cite{FMR67} as the most prominent examples. 

Focusing on finite automata equipped with a register containing a singleton, one can list automata with multiplication \cite{ISK76}, automata over groups \cite{MS97} and M-automata \cite{Ka09} among the many such proposed models. In these machines, the register can store rational numbers, elements from a group, or a monoid, and can be modified by multiplication. A computation is deemed successful if the register, which is initialized to the identity element, is equal to the identity element at the end. 
                                      
Generalizing the idea of finite automata equipped with a register, we have previously introduced ``vector automata'' in \cite{SYS13}. A vector automaton is a finite state automaton which is endowed with a vector and which can multiply this vector with an appropriate matrix at each step. The input is read real-time and only one of the entries can be tested for equality to a rational number every step. The machine accepts an input string if the computation ends in an accept state and the test for equivalence succeeds. 

Many important models of probabilistic and quantum computation \cite{Tu69,LR14} can be viewed in terms of vectors being multiplied by matrices. Vector automata are useful for focusing on this matrix multiplication view of programming, abstracting the remaining features of such models away. In order to incorporate the aforementioned notion of the computation being successful if the register/counter returns to its initial value at the end of the computation to this setup,  we propose a new model called ``Homing Vector Automaton'' in this paper. A homing vector automaton can multiply its vector with an appropriate matrix at each step and can check the entire vector for equivalence to the initial value of the vector. The acceptance criterion is ending up in an accept state with the value of the vector being equal to the initial vector. We focus on real-time input throughout the paper. 

We provide an exact characterization of the class of languages recognized by these machines for the case where the alphabet is unary. We define ``blind'' homing vector automata, where the equality test can be performed only at the end of the computation. The blind version of our model can be seen as a generalization of some well known models such as real-time blind multicounter automata \cite{Gr78}.  The nondeterministic version of our model is capable of recognizing some $ \mathsf{NP} $-complete languages. We compare the related language classes recognized by different versions of our model, and show a hierarchy result based on the dimension of the vector when the matrix entries belong to a restricted set. 
A method we use for encoding strings on an alphabet of arbitrary size in a blind homing vector automaton, based on Stern-Brocot trees \cite{St58,Br61}, may be of independent interest. 

\section{Preliminaries}

The following notation will be used throughout the paper: $Q$ is the set of
states, where $q_0 \in Q$ denotes the initial state, $Q_a \subset Q$ denotes the
set of accept states, and $\Sigma$ is the input alphabet. An input string $w$ is placed between two endmarker symbols
on an infinite tape in the form $\cent w\dollar$. We define $ \tilde{\Sigma}=\Sigma \cup \{\cent,\dollar\} $. By $ w^r $, we represent the reverse of the string $ w $. $ w_i $ denotes the $ i $'th symbol of $ w $. 

For a machine model $A$, $\mathfrak{L}(A)$
denotes the class of languages recognized by machine of type $A$.

Throughout the paper we will focus on real-time computation where the input head moves right at each step. All models presented below operate in real-time. We start with multicounter automata.

A \textit{real-time deterministic $k$-counter automaton} (D\textit{k}CA) \cite{FMR68} is a 5-tuple
\[ \mathcal{M}=(Q, \Sigma, \delta, q_0, Q_a). \]

The transition function $\delta$ of $\mathcal{M}$ is specified so that
$\delta(q,\sigma,\theta)=(q',c)$
means that $\mathcal{M}$
moves the head to the next symbol, switches to state $q'$, and updates its counters according to the list of increments represented by $c \in \{-1,0,1\}^k$,
if it reads symbol $\sigma \in \Sigma$, when  in state $q \in Q$, and
with $\theta  \in \{=,\neq\}^k $ describing whether the respective counter values equal zero
or not. At the
beginning of the computation, the tape head is placed on the symbol $\cent$,
and the counters are set to 0. At the end of the computation, that is, after the right endmarker $\dollar$
has been scanned, the input is
accepted if $\mathcal{M}$ is in an accept state.

A \textit{real-time deterministic blind $k$-counter automaton} (D\textit{k}BCA) \cite{Gr78}
$\mathcal{M}$ is a D\textit{k}CA which can
check the value of its counters only at the end of the computation. Formally,
the transition function is now replaced by $\delta(q,\sigma)=(q',c).$ The input
is accepted at the end of the computation if $\mathcal{M}$ enters an accept
state, and all counter values are equal to 0.

A \textit{real-time deterministic vector automaton of dimension $k$} (DVA($ k $)) \cite{SYS13} is a 6-tuple
\[\mathcal{V} =(Q,\Sigma,\delta,q_0,Q_a,v),\]
 where
$v$ is a $k$-dimensional initial row vector, and the
transition function $\delta$ is defined as
\[\delta: Q \times \tilde{\Sigma} \times \Omega \rightarrow Q\times S,\]
such that $S$ is the set of $k \times k$ rational-valued matrices, and $\Omega=\{=,\neq\}$, where $ = $  indicates
equality to 1, and $ \neq $ otherwise.

Specifically, $\delta(q,\sigma,\omega)=(q',M)$ means that when $\mathcal{V}$ is
in state $q$ reading symbol $\sigma \in \tilde{\Sigma}$, and the first entry of its
vector corresponds to $\omega \in \Omega$ (with $\omega$ having the value = if and only if this entry is equal
to 1),
$\mathcal{V}$ moves to state $q'$, multiplying its vector with the matrix $M
\in S$. $\omega$ is taken to be = if the first entry of the vector equals 1,
and $\neq$ otherwise. The string is accepted if
$\mathcal{V}$ enters
an accept state, and the first entry of the vector is 1, after processing the right end-marker symbol
$\dollar$.

\section{Homing Vector Automata}

A \textit{real-time deterministic homing vector automaton} (DHVA(\textit{k})) $ \mathcal{V} $ is a vector automaton which checks the value of the vector for equivalence to the initial vector instead of checking a single entry. Formally, a DHVA(\textit{k})  is a 6-tuple
\[\mathcal{V} =(Q,\Sigma,\delta,q_0,Q_a,v),\]
 where
$v$ is a $k$-dimensional initial row vector, and the
transition function $\delta$ is defined as
\[\delta: Q \times \Sigma  \times \Omega \rightarrow Q\times S,\]
such that $\Omega=\{=,\neq\}$, where $ = $  indicates
equality to initial vector $ v $, and $ \neq $ otherwise and $S$ is the set of $k \times k$ rational-valued matrices. The initial vector is freely chosen by the designer of the automaton.

Specifically, $\delta(q,\sigma,\omega)=(q',M)$ means that when $\mathcal{V}$ is
in state $q$ reading symbol $\sigma \in \Sigma$, and the vector corresponds to $\omega \in \Omega$ (with $\omega$ having the value = if and only if the vector is equal to the initial vector),
$\mathcal{V}$ moves to state $q'$, multiplying its vector with the matrix $M
\in S$ on the right. Thus the vector $ v_i $ at step $ i $ is obtained by multiplying the vector $ v_{i-1} $ at step $ i-1 $ by an appropriate matrix $ M $ so that $ v_i=v_{i-1}M $.
The string is accepted if
$\mathcal{V}$ enters
an accept state, and the vector is equal to the initial vector $ v $ when reading the right end-marker symbol
$\dollar$.

A \textit{real-time deterministic blind homing vector automaton} (DBHVA(\textit{k})) is a
DHVA(\textit{k}) which is not allowed to check the vector until the end of the computation. The transition function $\delta$ is defined as
$$ \delta: Q \times \Sigma \rightarrow Q\times S, $$
with $S$ as defined earlier.
$\delta(q,\sigma)=(q',M)$ means that when $\mathcal{V}$ reads symbol $\sigma \in \Sigma$
in state $q$, it will move to state
$q'$, multiplying the vector  with the matrix $M \in S$. The acceptance condition is the same as for
DHVA($ k $)'s. 

A \textit{real-time nondeterministic homing vector automaton} (NHVA(\textit{k})) is a DHVA(\textit{k}) which has the additional capability of making nondeterministic choices. The transition function $\delta$ is now replaced by 
$$\delta: Q \times
\Sigma \times \Omega \rightarrow \mathbb{P}(Q\times S), $$ 
where $\mathbb{P}(A)$ denotes the
power set of the set $A$. 

A \textit{real-time
nondeterministic blind homing vector automaton} (NBHVA(\textit{k})) is just a NHVA(\textit{k}) which does not check
the vector until the end of the computation. The transition function $ \delta $ is defined as
$$ \delta: Q \times \Sigma \rightarrow \mathbb{P}(Q\times S). $$

\section{Blindness, Tally Languages, and Nondeterminism}

The definition of homing vector automata allows arbitrary rational matrices. In most automaton algorithms in this paper, the entries of the matrices belong to the set $ \{-1,0,1\} $, since this basic set already captures many capabilities of homing vector automata. Let us note that multiplications with matrices whose entries belong to this set can be used to perform additions, subtractions, resets, and swaps between the vector entries. It is possible to recognize some of the languages in the following discussion with homing vector automata of lower dimension when a larger set of matrix entries is allowed. Some related open questions can be found in Section \ref{sec:end}.

We start by comparing the blind and non-blind versions of our model. 

\begin{theorem}\label{thm:blind}
$ \bigcup_k \mathfrak{L} 
\textup{(DBHVA(\textit{k}))} \subsetneq \bigcup_k \mathfrak{L} \textup{(DHVA(\textit{k}))}. $
\end{theorem}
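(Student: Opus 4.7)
The plan is to establish the containment and the strictness separately. For the containment $\bigcup_k \mathfrak{L}(\text{DBHVA}(k)) \subseteq \bigcup_k \mathfrak{L}(\text{DHVA}(k))$, I would give a direct simulation: a DBHVA is essentially a DHVA whose transition function simply ignores the equality bit $\omega$. Given a DBHVA$(k)$ with $\delta_B(q,\sigma) = (q',M)$, define a DHVA$(k)$ of the same dimension, same initial vector, and same accept set, whose transition satisfies $\delta_D(q,\sigma,\omega) = \delta_B(q,\sigma)$ for both values of $\omega \in \Omega$. The two machines produce identical runs on any input and share identical end-of-input acceptance conditions.

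For strictness, I would exhibit a concrete witness whose recognition genuinely benefits from checking the homing condition mid-computation. A natural candidate is $L = \{a^n b^n : n \geq 1\}^+$, the set of nonempty concatenations of balanced $a^n b^n$ blocks. A DHVA$(1)$ with initial vector $(1)$, states $A$ (initial) and $B$ (accepting), multiplication by $2$ on $a$ and $1/2$ on $b$, and transitions that stay in $A$ on $a$, move $A \to B$ on the first $b$, stay in $B$ on further $b$'s, and crucially return $B \to A$ on an $a$ only when the vector currently equals $1$ (otherwise rejecting) accepts $L$; the end-of-input check further enforces state $B$ with vector $1$. A short case analysis on $ab$, $aabb$, $abab$, $aabb\,ab$ (all accepted) and $a^2 b a b^2$, $abba$ (both rejected at the second $a$) confirms correctness.

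To show $L \notin \bigcup_k \mathfrak{L}(\text{DBHVA}(k))$, I would argue by contradiction. Suppose a DBHVA$(k)$ $M$ with initial vector $v_0$, symbol matrices $M_a, M_b$, and finite-state skeleton recognizes $L$. Since $ab, aabb \in L$ force the skeleton to accept them and the vector to return to $v_0$, I obtain $v_0 M_a M_b = v_0$ and $v_0 M_a^2 M_b^2 = v_0$. In the invertible case these chain (via $v_0 M_a = v_0 M_b^{-1}$ and $v_0 M_a^2 = v_0 M_b^{-2}$) to give $v_0 M_a^2 M_b M_a M_b^2 = v_0$; a pumping argument on the skeleton over the infinite subfamily $\{a^n b^n\}_{n \geq 1} \subseteq L$ then forces the skeleton to accept $a^2 b a b^2$ as well, so $M$ erroneously accepts this non-$L$ string. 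The main obstacle will be making the algebra go through when $M_a$ or $M_b$ is non-invertible; my fallback is to work inside the finite-dimensional orbit of $v_0$ under the semigroup generated by $M_a, M_b$ and apply pigeonhole to the family $\{a^n b^n \in L\}_{n \geq 1}$ to extract indices $n_1 \neq n_2$ on which both the skeleton state and the vector $v_0 M_a^{n_\bullet}$ coincide, yielding erroneous acceptance of the unbalanced string $a^{n_1} b^{n_2} \notin L$.
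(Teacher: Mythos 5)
The containment direction and your DHVA(1) for $L=\{a^nb^n : n \ge 1\}^+$ are fine, but the separation collapses: $L$ is in fact recognizable by a DBHVA, so no impossibility proof for it can exist. Keep a four-entry vector $(c,x,y,u)$ initialized to $(0,1,1,1)$, let the states track membership in $(a^+b^+)^+$ and whether the previous symbol was $a$ or $b$, and use diagonal updates $x\mapsto 2x$, $y\mapsto y/2$ on $a$ and $x\mapsto x/2$, $y\mapsto 2y$ on $b$, so that at the end of a block with imbalance $i=\#a-\#b$ one has $(x,y)=(2^{i},2^{-i})$; on the $a$ that opens a new block additionally perform $c\mapsto c+x+y-2u$ and reset $(x,y)$ to $(2,1/2)$ (all of this is a single rational matrix). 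Since $2^{i}+2^{-i}-2\ge 0$ with equality iff $i=0$, the final vector equals $(0,1,1,1)$ iff every block is balanced. Independently of this, your lower-bound argument has three gaps: (i) in a DBHVA the matrix applied depends on the current state as well as the symbol, so there are no well-defined global $M_a,M_b$; (ii) even granting $v_0M_a^2M_bM_aM_b^2=v_0$, this only forces the \emph{vector} condition for $a^2bab^2$, not the \emph{state} condition --- the finite skeleton may perfectly well accept the regular language $(a^+b^+)^+\setminus\{a^2bab^2\}$, which contains all of $L$, so no ``pumping on the skeleton'' can force it to accept $a^2bab^2$; (iii) the fallback pigeonhole on pairs consisting of a state and the vector reached after $a^{n}$ fails because the vector ranges over an infinite set --- pigeonhole is only available on states.

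The paper's proof shows why the witness must be chosen differently. It uses $\mathtt{L}=\{a^nb^{a_1}a^{a_2} \mid n=a_1 \mbox{ or } n=a_1+a_2\}$ and pigeonholes on states over the strings $a^nb^ma^{n-m}$ and $a^nb^na^{n-m}$, both of which are \emph{accepted}; acceptance by a blind machine pins the vector to $v_0$ in both cases, so the two runs end in identical configurations, and appending a single $a$ then yields one string inside and one outside $\mathtt{L}$. The technique therefore needs two accepted strings with different residual languages. Your $L$ cannot supply them: every $w\in\{a^nb^n\}^+$ has the same residual $\{\varepsilon\}\cup L$, i.e., all members of $L$ are Myhill--Nerode equivalent. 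That is exactly the structural feature that lets the blind machine above succeed, and it is the reason the disjunctive condition (``$n=a_1$ or $n=a_1+a_2$'') in the paper's witness is essential.
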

\begin{proof}
It is obvious that any DBHVA($ k $) can be simulated by a DHVA($ k $). We are going to prove that the inclusion is proper by the witness language $ \mathtt{L}=\{a^nb^{a_1}a^{a_2}|n=a_1 \mbox{ or } n=a_1 + a_2\} $. Let us first construct a DHVA(2) $ \mathcal{V} $ recognizing $\mathtt{L} $. The idea is to simulate a counter with the help of the matrices. Starting with the initial vector 
$ 
\left [
\begin{array} {rr}
1&1\\
\end{array}
\right ]
 $, $\mathcal{V} $ multiplies the vector with the matrix $ M_+ $ for each  $ a $ it reads before the $b$'s,  incrementing the first entry of the vector with each such multiplication. After finishing reading the first segment of $ a $'s, $\mathcal{V} $ multiplies the vector with the matrix $ M_- $, decrementing the first entry of the vector for each $ b $.

$$M_{+}=
\left [
\begin{array} {rr}
1&0\\
1&1\\
\end{array}
\right ]~~~~~
M_{-}=
\left [
\begin{array}{rr}
1&0\\
-1&1\\
\end{array}
\right ]
$$ 

At each step, $ \mathcal{V} $ checks the current value of the vector for equality to $ 
\left [
\begin{array} {rr}
1&1\\
\end{array}
\right ]
$. If the equality is detected right after finishing reading the $ b $'s, it is the case that $ n=a_1 $, and $ \mathcal{V} $ multiplies the vector with the identity matrix at each step for the rest of the computation. If that is not the case, $ \mathcal{V} $ continues to multiply the vector with matrix $ M_- $ for each $ a $ after the $b$'s. The value of the vector will be equal to 
$  \left [
 \begin{array} {rr}
 1&1\\
 \end{array}
 \right ] $ at the end of the computation  if and only if $ n=a_1 $ or $ n =a_1+a_2$. 
 
Note that $ \mathtt{L} $ can be also recognized by a DHVA(1) by using the matrices $ M_{+}=2 $ and $ M_{-}=\frac{1}{2} $.
 
Now we are going to show that $ \mathtt{L} $ can not be recognized by any DBHVA($ k $). Suppose for a contradiction that $ \mathtt{L} $ is recognized by some DBHVA($ k $) $ \mathcal{V'} $. After reading a sufficiently long input prefix of the form $ a^n$, the computation of $ \mathcal{V'} $ on a sufficiently long postfix of $b$'s will go through a sequence of states, followed by a state loop. Suppose that $ \mathcal{V'} $ is in the same state after reading two different strings $ a^nb^m $ and $ a^nb^n $, $ m<n $. Now consider the strings $u= a^nb^ma^{n-m} \in \mathtt{L} $ and $ w=a^nb^na^{n-m} \in \mathtt{L} $. After reading any one of these strings, $ \mathcal{V'} $ should be in the same accept state, and the  vector should be at its initial value. Assume that the strings in question are both extended with one more $ a $.  Since the same vector is being multiplied with the same matrix associated with the same state during the processing of that last $ a $, it is not possible for $ \mathcal{V'} $ to give different responses to $ a^nb^na^{n-m+1}$ and $ a^nb^ma^{n-m+1}$. Noting that $ a^nb^na^{n-m+1} \in \mathtt{L}$, whereas $ a^nb^ma^{n-m+1} \notin \mathtt{L}$, we conclude that $ \mathtt{L} $ can not be recognized by any DBHVA($ k $). 
\end{proof}

We can give the following characterization when the alphabet is unary.

\begin{theorem}\label{thm:unary}
For any $k$, all languages over $\Sigma = \{a\} $ accepted by a \textup{DHVA($ k $)} are regular.
\end{theorem}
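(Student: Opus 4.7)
The plan is to exploit the fact that on a unary alphabet the computation of a DHVA($k$) is driven by only a finite amount of information at each step, even though the vector itself ranges over an infinite set. Concretely, the next state and next matrix at time $t$ are determined by the pair $(q_t,\omega_t)$ where $\omega_t\in\{=,\neq\}$ is the single bit indicating whether the current vector equals the initial vector $v$. The crucial observation I would use is that whenever $\omega_t = {=}$ holds, the vector is \emph{exactly} $v$, so the full configuration (state, vector) at such a moment is determined by the state alone and hence takes at most $|Q|$ values.

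From this I would extract the key lemma: if there are two times $t_1<t_2$ at which the automaton is in the same state $q$ with $\omega_{t_1}=\omega_{t_2}={=}$, then the computation is periodic from time $t_1$ onward with period $t_2-t_1$. This follows because at both moments the configuration (state, vector) equals $(q,v)$, and the deterministic transition function then forces the sequences of states, vectors, and equality bits from $t_1$ and from $t_2$ to coincide step by step.

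I would then split into two cases. If $\omega_t = {=}$ occurs infinitely often along the computation on $a^\infty$, then by pigeonhole applied to the $|Q|$ possible states occupied at such moments, the lemma's hypothesis is met, so the entire computation is eventually periodic in all its relevant data; the set of lengths $n$ at which $\mathcal V$ simultaneously enters $Q_a$ and has $\omega_n={=}$ is therefore eventually periodic, hence a regular unary language. If instead $\omega_t = {=}$ occurs only finitely often, say never after time $T$, then since acceptance requires $\omega = {=}$ at the end, only inputs of length at most $T$ can be accepted, making the language finite. In either case the accepted language is regular.

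I do not foresee a serious technical obstacle; the only conceptual step is recognising that the seemingly infinite state space of the vector collapses to at most $|Q|$ distinct configurations at the moments that matter, after which everything reduces to a routine eventual-periodicity argument on a finite set of configurations.
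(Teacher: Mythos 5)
Your proposal is correct and follows essentially the same route as the paper: both arguments rest on the observation that at any moment the vector equals the initial vector $v$, the full configuration is determined by the state alone, so a pigeonhole over $|Q|$ yields a repeated configuration and hence eventual periodicity of the computation, while the case where the vector returns to $v$ only finitely often gives a finite language. The paper phrases the pigeonhole over accept states of accepted strings and then explicitly builds the DFA, but this is only a presentational difference from your version.
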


\begin{proof}
Let $ \mathtt{L} $ be a unary language accepted by a DHVA($k$) $ \mathcal{V} $ and let $ v $ be the initial vector of $ \mathcal{V}$. We are going to construct a DFA recognizing $ \mathtt{L} $ to prove that $ \mathtt{L} $ is regular. We assume that $ \mathtt{L} $ is infinite and make the following observation. Since $ \mathcal{V} $ has finitely many states, at least one of the accept states of $ \mathcal{V} $ will be accepting more than one string. Let $ w_1 $ and $ w_2 $ be the shortest strings accepted by an accept state $ q_a $ with $ |w_1|<|w_2| $. When accepting $ w_1 $ and $ w_2$, $ \mathcal{V} $ is in state $ q_a $ and the value of the vector is equal to $ v $. After reading $ w_2 $, $ \mathcal{V} $ is in the same configuration as it was after reading $ w_1 $ and this configuration will be repeated inside a loop of $|w_2|-|w_1|= p $ steps. Therefore, we can conclude that all strings of the form $ a^{|w_1|+kp} $ for some positive integer $ k $ will be accepted by $ q_a $. 
 
Between consecutive times $ q_a $ accepts a string, some other strings may be accepted by some other accept states. Let $ u $ be a string accepted by $ q_b $ with $ |w_1| < |u| < |w_2| $. Then all strings of the form $ a^{|u|+kp} $ for some positive integer $ k $ will be accepted by $ q_b$ since every time $ \mathcal{V} $  enters the accepting configuration at state $ q_a $, $ \mathcal{U} $ will enter the accepting configuration at state $ q_b $ after $ |u|-|w_1| $ steps. The same reasoning applies to any other accepting configuration inside the loop. 
 
Now, let us construct a DFA $ \mathcal{D} $ accepting $ \mathtt{L} $. $ \mathcal{D} $ has $ |w_1|+1+(p-1) $ states. The first $ |w_1|+1 $ states correspond to the strings of length at most $ |w_1| $ and the state $ q_{|w|} $ is an accept state if $ w \in \mathtt{L}$. $ q_{|w_1|} $ and the next $ p-1 $ states $ q_{l_2},\dots,q_{l_p} $ stand for the configuration loop. States corresponding to accepting configurations inside the loop are labeled as accept states. 

The transitions of the DFA are as follows:
\begin{align*}
 \delta(q_i,a)&=q_{i+1} \mbox{ for }  i=0,\dots,|w_1|-1 \\
 \delta(q_{|w_1|},a)&=q_{l_2} \\
 \delta(q_{l_i},a)&=q_{l_{i+1}}  \mbox{ for }  i=2,\dots,p-1 \\
 \delta(q_{l_p},a)&=q_{|w_1|} \\
\end{align*}

Since $ \mathtt{L} $ can be recognized by a DFA, $  \mathtt{L} $ is regular. We conclude that any unary language accepted by a  \textup{DHVA($ k $)}  is regular.
\end{proof}

In the following theorem, we show that nondeterministic homing vector automata are more powerful than their deterministic versions, both in the blind and nonblind cases.
\begin{theorem}\label{thm:upow}
\begin{enumerate}[i.]
\item $ \bigcup_k \mathfrak{L}\textup{(DBHVA(\textit{k}))} \subsetneq \bigcup_k \mathfrak{L} \textup{(NBHVA(\textit{k}))}$. 
\item $ \bigcup_k \mathfrak{L}\textup{(DHVA(\textit{k}))} \subsetneq \bigcup_k \mathfrak{L} \textup{(NHVA(\textit{k}))}$. 
\end{enumerate}
\end{theorem}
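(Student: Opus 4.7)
The plan for both parts is to exhibit witness languages that are recognized by the nondeterministic model but provably outside the corresponding deterministic class. In each case, the nondeterministic direction uses the standard branching trick: the machine guesses at the first step which of several ``deterministic sub-programs'' to run, each simulating a counter through matrix multiplications in the style of the $M_+, M_-$ construction of Theorem~\ref{thm:blind}, and the overall accepted language is the union of what the branches accept. The hard direction is always the impossibility proof for the deterministic model.

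For part~(i), I take the witness $\mathtt{L}_1 = \{a^nb^n : n \ge 0\} \cup \{a^nb^{2n} : n \ge 0\}$. An NBHVA(2) recognizes $\mathtt{L}_1$ by nondeterministically choosing between branch~A, which uses $M_+$ on $a$ and $M_-$ on $b$ so that the vector returns to $[1~1]$ iff $n_a = n_b$, and branch~B, which uses $M_+^2$ (equivalently, a matrix incrementing by $2$) on $a$ and $M_-$ on $b$, forcing $n_b = 2 n_a$. For the lower bound, suppose a DBHVA($k$) $\mathcal{V}'$ recognizes $\mathtt{L}_1$. Because $\mathcal{V}'$ is blind, the state trajectory on the suffix $b^*$ after $a^n$ is determined solely by the state after $a^n$; by finiteness this state is eventually periodic in $n$ and enters a $b$-cycle of some period $p$, so restricting to $n$ that are large multiples of $p$ makes the state at step $n$ (of the $b$-suffix) and the state at step $2n$ equal to a common accept state $q$. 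The required vector values $v_0$ at steps $n$ and $2n$ then force $v_0 B = v_0$, where $B$ is the matrix product of one block of $n$ transitions in the cycle. Reading a further $b^n$ from $(q, v_0)$ reproduces $B$ (by determinism and blindness) and returns to $(q, v_0)$, so $\mathcal{V}'$ accepts every $a^n b^{(k+1)n}$; in particular $a^n b^{3n}$, which is not in $\mathtt{L}_1$, contradicting correctness.

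For part~(ii), I would use an analogous disjunctive witness tailored so that the intra-computation equality test does not help; a natural candidate is $\mathtt{L}_2 = \{a^i b^j c^k : i = j \text{ or } j = k\}$. Nondeterministic recognition is by the same branching trick with two counter simulations (one checking $a-b$, the other $b-c$). The main obstacle I anticipate is the DHVA impossibility proof, because the clean ``state determines future behavior'' step used in part~(i) breaks down: a non-blind transition may branch on whether the current vector equals $v_0$, so two prefixes leading to the same state but different vectors can diverge. The remedy is to pigeonhole on the enriched configuration (state, binary pattern of $=/\ne$ test outcomes seen along the suffix being pumped); since for a fixed deterministic machine this pattern along $b^*$ or $c^*$ is itself ultimately periodic, one obtains a joint period $p$ for both the state and the test sequence, and the argument of part~(i) lifts to this richer space, again producing acceptance of an unwanted string of the form $a^i b^j c^k$ with $i \neq j$ and $j \neq k$.
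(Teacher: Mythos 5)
Your part~(i) is correct, but it takes a genuinely different route from the paper. The paper proves both separations at once with the single unary witness $\mathtt{UPOW}=\{a^{n+2^n}\mid n\geq 1\}$: an NBHVA(3) nondeterministically guesses $n$, doubling an entry up to that point and decrementing afterwards, and the lower bound for \emph{both} the blind and non-blind deterministic classes is immediate from Theorem~\ref{thm:unary}, since $\mathtt{UPOW}$ is unary and nonregular. You instead use a binary disjunctive witness and a direct pumping argument against DBHVA($k$); that argument (the accepting configurations at $b$-steps $n$ and $2n$ force $v_0B=v_0$ for the block matrix $B$, and blindness plus determinism lets you replay the block to accept $a^nb^{3n}\notin\mathtt{L}_1$) is sound, provided $n$ is chosen in a suitable arithmetic progression so that the state reached after $a^n$ and the position inside the $b$-cycle are simultaneously synchronized (take an lcm of the two periods). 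This buys independence from the unary characterization theorem at the cost of a longer proof.

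Part~(ii), however, has a genuine gap. The proposed remedy---pigeonholing on ``(state, pattern of $=/\neq$ outcomes along the pumped suffix)''---is not justified as stated: the test outcome at a step depends on the current vector, which ranges over an infinite set, so the outcome pattern along $c^*$ is a property of the full configuration after $a^ib^j$, not of the state. One can show the pattern is ultimately periodic for each \emph{fixed} prefix (either the vector returns to $v$ at two steps sharing a state, forcing configuration periodicity, or it returns at most $|Q|$ times and is eventually constantly $\neq$), but the preperiod and period are not uniformly bounded over prefixes, so there is no single ``joint period $p$'' to restrict to. The observation that rescues your plan is simpler: at any accepting moment the vector \emph{equals} the initial vector $v$, so the configuration there is $(q,v)$ and is determined by the state alone, and ordinary pigeonholing over these moments suffices. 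Concretely, for $\mathtt{L}_2$: since $a^ib^i\in\mathtt{L}_2$ for all $i$, pick $i_1<i_2$ reaching the same configuration $(q,v)$ after $a^{i_1}b^{i_1}$ and $a^{i_2}b^{i_2}$; appending the common suffix $bc^{\,i_1+1}$ forces the machine to treat $a^{i_1}b^{i_1+1}c^{i_1+1}\in\mathtt{L}_2$ and $a^{i_2}b^{i_2+1}c^{i_1+1}\notin\mathtt{L}_2$ identically. The same trick applied to $a^{n}b^{n}$ followed by $b^{n_1}$ shows your part-(i) witness $\mathtt{L}_1$ already lies outside $\bigcup_k\mathfrak{L}(\textup{DHVA}(k))$, so no second language is needed. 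The paper sidesteps all of this by reusing $\mathtt{UPOW}$ and Theorem~\ref{thm:unary} for part~(ii) as well.
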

\begin{proof}
\textit{i.} It is obvious that a DBHVA($ k $) can be simulated by a NBHVA($ k $). We are going to show that the inclusion is proper by constructing a NBHVA(3) $ \mathcal{V} $ recognizing the unary nonregular language $ \mathtt{UPOW}=\{a^{n+2^n}|n\geq 1\} $. Starting with the initial vector $  \left [
\begin{array} {rrr}
1& 1 & 1\\
\end{array}
\right ] $, $ \mathcal{V} $ multiplies the vector with matrix $ U_1 $ when reading each $ a $. The idea is to add the first and second entries together repeatedly to obtain powers of 2, so that after reading $ k $ symbols the value of the vector is equal to 
$  \left [
 \begin{array} {rrr}
  2^k&2^k&1\\
  \end{array}
\right ] $. $ \mathcal{V} $ nondeterministically guesses $ n $ and starts decrementing the first entry from that point on by multiplying the vector with the matrix $ U_2 $. At the end of the computation, the value of the vector is equal to 
$  \left [
   \begin{array} {rrr}
   1& 1 & 1\\
   \end{array}
   \right ] $ if and only if the input string is of the form $ a^{n+2^n} $ for some $ n $. 
   $$U_{1}=
   \left [
   \begin{array} {rrr}
   1&1&0\\
   1&1&0\\
   0&0&1\\
   \end{array}
   \right ]~~~~~
   U_{2}=
   \left [
   \begin{array}{rrr}
   1&0&0\\
   0&0&0\\
   -1&1&1\\
   \end{array}
   \right ]
   $$ 
From Theorem $ \ref{thm:unary}$, we know that every unary language recognized by a DHVA($ k $) is regular, concluding that $ \mathtt{UPOW}\notin \bigcup_k \mathfrak{L} \textup{(DBHVA(\textit{k}))}$ .\\

\noindent \textit{ii.} It is obvious that a DHVA($ k $) can be simulated by a NHVA($ k $). The inclusion is proper as we have shown that $ \mathtt{UPOW} $ can be recognized by a NHBVA(3), a feat that is impossible for DHVA($ k $)'s for any $k$.

Let us remark that it is possible to recognize $ \mathtt{UPOW} $ by a NBHVA(2) when the matrix entries are not restricted to the set $ \{-1,0,1\} $.
\end{proof}

In the following theorem, we show that by allowing nondeterminism it is possible to recognize an $\mathsf{NP}$-complete language. $\mathtt{SUBSETSUM}$ is the $\mathsf{NP}$-complete language which is the collection of all strings of the form $t \#  a_1\#...\# a_n\#$, such
that $t$ and the $a_i$'s are numbers in binary notation $(1 \leq i \leq n)$, and there
exists a set $I \subseteq \{1, . . . , n\}$ satisfying $\sum_{i \in
I}a_i=t$, where $n > 0$.  We define $\mathtt{SUBSETSUM}_r=\{ t^r \#  a_1^r\#...\# a_n^r\#\ | \exists I \subseteq \{1, . . . , n\} \mbox{ s.t. } \sum_{i \in
I}a_i=t\}$ in which the binary numbers appear in reverse order. It is obvious that $\mathtt{SUBSETSUM}_r \in \mathsf{NP}$, since $ \mathtt{SUBSETSUM}  \in \mathsf{NP} $. It is possible to reduce $\mathtt{SUBSETSUM}$ to $ \mathtt{SUBSETSUM}_r$ in polynomial time by reversing the binary numbers that appear in the input. Therefore, we can conclude that $ \mathtt{SUBSETSUM}_r$ is $\mathsf{NP}$-complete.   

\begin{theorem}
$ \mathtt{SUBSETSUM}_r \in \mathfrak{L}(\textup{NBHVA(5)}) $.
\end{theorem}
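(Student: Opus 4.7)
The plan is to build a $5$-dimensional NBHVA $\mathcal{V}$ that nondeterministically guesses a subset $I \subseteq \{1,\ldots,n\}$ while reading the input, maintains the signed sum $t - \sum_{i \in I} a_i$ inside one vector entry, and engineers the final homing check to fire exactly when that sum is $0$. The feature I exploit throughout is that each binary number is presented least-significant-bit first (because of the $(\cdot)^r$ in the language), so each number can be accumulated online by a place-value entry that doubles with every bit read.

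Concretely, I will use initial vector $v = [1,1,1,1,1]$ and think of the entries as $(s, p_1, p_2, q, c)$. The accumulator $s$ is intended to start and end at $1$, so the $1$ in the initial vector absorbs the bookkeeping offset. The two copies $p_1, p_2$ both track the current place value, duplicated because the matrix entries are restricted to $\{-1,0,1\}$ and the update $p_1' = p_2' = p_1 + p_2$ is the natural way to realise the doubling $p \mapsto 2p$ under that restriction. Entry $q$ is a constant $1$ that no bit-transition ever touches, reserved for the place-value reset, and $c$ is a second inviolate $1$ used as a slack coordinate. On a bit $b$, the matrix associated with "add mode" performs $s \leftarrow s + b\,p_1$ and $p_i \leftarrow p_1 + p_2$; "subtract mode" uses $s \leftarrow s - b\,p_1$ instead; "ignore mode" only doubles the $p_i$.

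At each boundary $\#$, I apply a fixed matrix that overwrites $p_1$ and $p_2$ with $q$ and leaves $s, q, c$ alone; its entries are in $\{0,1\}$ and, crucially, the same matrix works no matter how many bits the just-finished number had. The finite-state control cycles through "reading $t$" (add), "just saw $\#$", and for each subsequent $a_i$ either "selected" (subtract) or "unselected" (ignore). The nondeterministic branching happens at every $\#$-to-bit transition, where $\mathcal{V}$ guesses whether the upcoming $a_i$ lies in $I$; over an entire input this explores all $2^n$ candidate subsets in parallel. Accept states are those entered immediately after reading the final $\#$, at which moment $p_1 = p_2 = q = c = 1$ holds automatically, so the homing condition collapses to $s = 1$, i.e.\ $t = \sum_{i \in I} a_i$ for the branch's guessed $I$.

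The main technical hurdle is reconciling three competing pressures: the $\{-1,0,1\}$ matrix restriction, the need to double a place value at every bit, and the need to reset that place value at each $\#$ without the reset depending on the just-completed number's length. Duplicating $p$ into $p_1$ and $p_2$ handles the first two together (addition implements doubling), while the untouched $q$-entry handles the third, since we can always copy from $q$ regardless of the current $p_i$. Once these invariants are in place, the remainder is routine: verify that each matrix implements the claimed update, verify that some accepting branch exists iff some valid $I$ exists, and conclude $\mathtt{SUBSETSUM}_r \in \mathfrak{L}(\textup{NBHVA(5)})$.
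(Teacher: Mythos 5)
Your construction is correct and follows essentially the same approach as the paper's: nondeterministically select the subset, accumulate each reversed binary number via a pair of duplicated place-value entries that double by adding to each other, reset the place value at each $\#$ by copying from an untouched constant entry, and arrange for the homing condition to reduce to the signed sum being zero. The only (immaterial) difference is that you fold everything into a single signed accumulator updated bit-by-bit, whereas the paper keeps $t$ and the current $a_i$ in separate entries and performs the subtraction at each $\#$.
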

\begin{proof}
We construct a NBHVA(5) $\mathcal{V}$ recognizing 
$\mathtt{SUBSETSUM}_r$. The idea of this construction is to read the binary numbers in the string to entries of the vector, and to nondeterministically select the set of numbers that add up to $ t $.  We let the initial vector equal $ \left[
\begin{array}{rrrrr}
0&0&1&1&1
\end{array} \right ] $. We first encode $ t $ to the first entry of the vector as follows: While scanning the symbols of $t$, $\mathcal{V}$ multiplies the vector with the matrix $M_{T_0}$ (resp. $M_{T_1}$) for each scanned $0$
(resp. $1$). The  powers of 2 required for the encoding are obtained by adding the third and fourth entries, which always contain identical numbers, to each other, creating the effect of multiplication by 2. When $\mathcal{V}$ reads a $\#$, $\mathcal{V}$ multiplies the vector with the matrix $ M_{\#} $ which subtracts the second entry from the first entry and resets the second entry back to 0, and the third and  fourth entries back to 1.   
\[
 M_{T_0}=
\left[
\begin{array}{rrrrr}
1&0&0&0&0 \\
0&1&0&0&0\\
0&0&1&1&0\\
0&0&1&1&0\\
0&0&0&0&1
\end{array}
\right ]~~~~~
M_{T_1}=
\left[
\begin{array}{rrrrr}
1&0&0&0&0 \\
0&1&0&0&0\\
1&0&1&1&0\\
0&0&1&1&0\\
0&0&0&0&1
\end{array}
\right ]~~~~~
 M_{\#}=
 \left[
 \begin{array}{rrrrr}
 1&0&0&0&0 \\
 -1&0&0&0&0\\
 0&0&0&0&0\\
 0&0&0&0&0\\
 0&0&1&1&1
 \end{array}
 \right ] \]
 
In the rest of the computation, $\mathcal{V}$ nondeterministically decides which $a_i$'s to
subtract from the first entry.
Each selected $a_i$ is encoded using the same technique into the second entry of the vector. While scanning the symbols of $a_i$, $\mathcal{V}$ multiplies the vector with the matrix $M_{A_0}$ (resp. $M_{A_1}$) for each scanned $0$
(resp. $1$).

$$
M_{A_0}=
\left[
\begin{array}{rrrrr}
1&0&0&0&0 \\
0&1&0&0&0\\
0&0&1&1&0\\
0&0&1&1&0\\
0&0&0&0&1
\end{array}
\right ]~~~~~
M_{A_1}=
\left[
\begin{array}{rrrrr}
1&0&0&0&0 \\
0&1&0&0&0\\
0&1&1&1&0\\
0&0&1&1&0\\
0&0&0&0&1
\end{array}
\right ]
.$$

  $\mathcal{V}$ chooses another
$a_j$ if it wishes, and the same procedure is applied. At
the end of the input, $\mathcal{V}$ accepts if the vector is equal to 
$  \left[
 \begin{array}{rrrrr}
0&0&1&1&1 \\
 \end{array}
 \right ] $, which requires that the first entry of the vector is equal to 0. This is possible iff there exists a set of $ a_i $'s whose sum add up to $ t $.
\end{proof}

A language $ \mathtt{L} $ is in class $ \mathsf{TISP} $($ t(n),s(n) $) if there is a deterministic Turing machine that decides $ \mathtt{L} $ within $ t(n) $ time and $ s(n) $ space where $ n $ is the length of the input. Since the numbers in the vector can grow by at most a fixed number of bits in each multiplication, a Turing machine simulating a DHVA($ k $) requires only linear space \cite{SYS13}. Since the numbers in the vector can have length $O(n)$, whereas the matrix dimensions and entries are independent of the input length $n$, multiplication of a vector and a matrix requires $ O(n) $ time for each input symbol. We can conclude that $ \bigcup_k \mathfrak{L}$(DHVA($ k $))$ \subseteq  \mathsf{TISP}( n^2,n )$.

\section{Encoding Strings with Homing Vector Automata}

\subsection{Stern-Brocot Encoding}
The Stern-Brocot tree is an infinite complete binary tree whose nodes correspond one-to-one to positive rational numbers \cite{St58,Br61}. Crucially for our purposes, the Stern-Brocot tree provides a basis for representing strings as vectors of integers, as suggested for binary alphabets in \cite{GKP89}. 
When fractions are represented as vectors of dimension 2, where the entries correspond to the denominator and the numerator of the fraction, this encoding can be done easily in homing vector automata, as follows.

The empty string is represented by $ [\begin{array} {rr}
1&1
\end{array}] $. Now suppose that we want to encode a binary string $ w $ of length $ n $. For $ i=1 $ to $ n $, if $ w_i=0 $, we add the value of the first entry to the second one, and if $ w_i=1 $, we add the value of the second entry to the first one, multiplying the vector with the appropriate one of the following matrices $ M_0 $ and $ M_1 $:
$$ 
 M_{0}=
 \left [
 \begin{array}{rr}
 1&1\\
 0&1\\
 \end{array}
 \right ]~~~~~
 M_{1}=
 \left [
 \begin{array}{rr}
 1&0\\
 1&1\\
 \end{array}
 \right ]
 $$ 
A list of some binary strings and their encodings follows. A proof on the uniqueness of the encoding can be found in \cite{GKP89}. 

\begin{align*}
0 & \hspace{0.1in} 
[\begin{array} {lr}
1 & 2
\end{array}] 
& 00 & \hspace{0.1in}  [\begin{array} {lr}
1& 3
\end{array}] 
& 10 & \hspace{0.1in}  [\begin{array} {lr}
2& 3
\end{array}]  
& 000 & \hspace{0.1in}  [\begin{array} {lr}
1& 4
\end{array}]  & 010 & \hspace{0.1in}  [\begin{array} {lr}
3& 5
\end{array}]   \\
1 & \hspace{0.1in}  [\begin{array} {lr}
2& 1
\end{array}] & 01 & \hspace{0.1in}  [\begin{array} {lr}
3& 2
\end{array}]  & 11 & \hspace{0.1in}  [\begin{array} {lr}
3& 1
\end{array}]  & 001 & \hspace{0.1in}  [\begin{array} {lr}
4& 3
\end{array}]  & 011 & \hspace{0.1in}  [\begin{array} {lr}
5& 2
\end{array}]   
\end{align*}   
 
Given the vector representation $ v_w $ of a string $ w $, it is also possible to decode the string with the following procedure: Let $ |w|=n $ and $ v_w= [\begin{array} {lr}
a& b
\end{array}] $. Set $ w_n=0 $ if $ b>a $, and $ w_n=1 $ otherwise. Subtract the smaller entry from the larger one to obtain $ v_w^{n-1} $ and repeat this routine until you obtain the vector  $ [\begin{array} {lr}
1& 1
\end{array}] $. When the given vector is not a valid representation of a string, then it is not possible to obtain  $ [\begin{array} {lr}
1& 1
\end{array}] $. The  matrices required for this procedure are $ N_0 $, which has the effect of subtracting the value of the first entry of the vector it is multiplied with from the second entry, and $ N_1 $, for the symmetric action. Note that $ N_{0} = M_{0}^{-1}$  and $ N_{1} = M_{1}^{-1} $.

$$ 
N_{0}=
 \left [
 \begin{array}{rr}
 1&-1\\
 0&1\\
 \end{array}
 \right ]~~~~~
N_{1}=
 \left [
 \begin{array}{rr}
 1&0\\
 -1&1\\
 \end{array}
 \right ]
 $$ 
%
%
  
\subsection{Generalized Stern-Brocot Encoding}\label{sec:genstern}

We generalize the scheme mentioned above to strings on alphabets of arbitrary size and present a new method for encoding strings. Let $ \Sigma=\{a_1,a_2,\dots,a_k \} $, and $ w \in \Sigma^*$. With the \textit{generalized Stern-Brocot encoding} method described below, it is possible to uniquely encode $ w $ using a vector of size $ k $ and $ k \times k $ matrices whose entries belong to the set $ \{-1,0,1\} $. Let us note that one can use other methods to encode strings on arbitrary alphabet size using a vector of a smaller dimension but matrices whose entries belong to a larger set.   

 We start with the $ k $ dimensional vector $ [\begin{array} {rrrr}
 1&1&\dots & 1
 \end{array}] $, which represents the empty string. Suppose that  $|w|=n  $. To encode $w$, for $ i=1 $ to $ n $, if $ w_i=a_j $, the vector is multiplied with the  matrix $A_j  $, the $ k $ dimensional identity matrix whose $ j $'th column is replaced with a column of $ 1 $'s. Multiplication with $ A_j $ causes the  $ j $'th entry of the vector to be replaced by the sum of all the entries in the vector. 
 
 Among the different generalizations of the Stern-Brocot fractions, one that appears in \cite{Ga13} under the name of ``Stern's triatomic sequence'' is similar to the encoding we propose for the case $ k=3 $. The similarity lies in the construction of the sequence, but that sequence is not used for the purpose of encoding. As far as we know, no such generalization exists for the case $ k>3 $.
  
 In the following lemma, we prove the uniqueness of this generalized encoding.
 
 \begin{lemma}\label{lem:unique}
 No two distinct strings on $\Sigma$ \textup{ ($ |\Sigma|=k $)} can be represented by the same  vector of size $ k $ using the generalized Stern-Brocot encoding.
 \end{lemma}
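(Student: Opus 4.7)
The plan is to induct on the length of $w$, using the observation that the last symbol of $w$ is readable off the encoded vector as the index of its strict maximum entry, and that the last multiplication by $A_j$ can then be inverted.

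First I would establish two invariants by induction on $|w|$: (a) every entry of $v_w$ is a positive integer, and (b) if $|w|\ge 1$ and the final symbol of $w$ is $a_j$, then the $j$-th entry of $v_w$ is strictly larger than every other entry. Invariant (a) is immediate from the definition: the base vector $[1,1,\dots,1]$ has positive entries, and right-multiplication by $A_j$ replaces the $j$-th entry with the sum of all current entries (which is at least $k$) while leaving the others unchanged. For (b), let $u$ be the vector just before the last step; then the new $j$-th entry equals $u_j+\sum_{i\neq j}u_i$, whereas for each $\ell\neq j$ the $\ell$-th entry of $v_w$ is still $u_\ell$. Since $u_j\ge 1$ by (a), we obtain $u_j+\sum_{i\neq j}u_i>u_\ell$, so the strict maximum sits at position $j$.

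Next I would invert the encoding one step at a time. Because $A_j$ only alters the $j$-th coordinate, the predecessor vector is recovered by setting the $j$-th entry to $v_{w,j}-\sum_{i\neq j}v_{w,i}$ and leaving the remaining entries unchanged; invariant (a) guarantees that the resulting value is a positive integer, namely the $j$-th entry of the encoding of $w_1\cdots w_{n-1}$. Consequently, from any vector that arises as an encoding of a non-empty string, both the final symbol (the index of the strict maximum, well-defined by (b)) and the encoding of the length-$(n{-}1)$ prefix are uniquely determined. The uniqueness induction now closes cleanly: the empty string is distinguished from every non-empty string because $[1,\dots,1]$ has no strict maximum while every non-empty encoding does by (b); and two non-empty strings sharing the same encoding must share a final symbol and then have identical prefix encodings, so by the inductive hypothesis they are equal.

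The only real obstacle is making the strict-maximum argument genuinely strict, which is why invariant (a) must be carried along as the key strengthening. If some entry of the predecessor vector were allowed to be $0$, then the new $j$-th entry could merely tie with another, the argmax index would no longer single out the last symbol, and both the decoding and the induction would break. Everything else amounts to routine bookkeeping on the matrices $A_j$.
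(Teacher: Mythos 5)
Your proof is correct, but it takes a genuinely different route from the paper's. The paper argues directly that two distinct strings cannot collide: it inducts on length, assumes the prefix vectors $v_w^{n-1}$ and $v_u^{l-1}$ are distinct, and then runs a case analysis on how many entries those vectors disagree in (one, two, or more), showing in each case that a single further multiplication by some $A_j$ cannot reconcile them; the two-entry case needs a small arithmetic contradiction ($c+b+2x=0$ with positive quantities). You instead prove injectivity by exhibiting a well-defined left inverse: positivity of all entries plus the fact that the last symbol's position carries the \emph{strict} maximum lets you read off the final symbol and peel it away, so the decoder determines the string and collisions are impossible. Your approach buys a cleaner induction with no case analysis, and it simultaneously justifies the decoding procedure that the paper only states informally after the lemma (``has a unique maximum entry, say at position $j$''), which the paper's collision-based argument never actually proves. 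The paper's approach, conversely, does not need the strict-maximum invariant and works by brute-force exclusion. Two minor points to tighten in yours: state explicitly what the induction variable is in the final step (e.g., strong induction on $\max(|w|,|u|)$, or simply observe that the decoder is a function inverting the encoder, so injectivity is immediate); and note that $k\ge 2$ is needed --- for $k=1$ the matrix $A_1$ is the identity and the lemma is vacuously false, a caveat the paper's proof also silently assumes.
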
 
 \begin{proof}
 We will prove by induction on $n$ that if a $k$-dimensional vector $v$ is the generalized Stern-Brocot encoding of a string of length $n$, then $v$ is not the encoding of any other string of length at most $n$. 
  
 The empty string is represented by the $ k $-dimensional vector of 1's.
 The claim clearly holds for $n=0$, since no other strings of at most this length exist.
 Now assume that the claim holds for all natural numbers up to $ n-1 $. Let $ w $ be a string of length $ n $. The vector $ v_w $ representing $ w $ is obtained by multiplying the vector $ v_w^{n-1} $, representing the first $ n-1 $ symbols of $ w $, with $ A_j $ if $ w_n=a_j $. We will examine various possibilities regarding this final multiplication. Note that at a single step, it is possible to modify only a single entry of each vector. Now consider any string $ u \neq w $ with $ |u|=l $ and $ l \leq n $. If $ u $ and $ v $ have the same first $ n-1 $ symbols, then $ v_w^{n-1}=v_u^{l-1} $, the last symbols of the two strings are unequal, and it is not possible to obtain $ v_w=v_u $ since the same vector is multiplied by different matrices. In the remaining case, we know by the induction hypothesis that $ v_w^{n-1}\neq v_u^{l-1} $. If these vectors disagree in more than two entries, there is no way that one can obtain the same vector by multiplying them once with some matrices of the form $A_j$. So we consider the case of the two vectors disagreeing in at most two entries.
 
 Suppose that $ v_w^{n-1}$ and $v_u^{l-1} $ differ only in the  $ i $'th entry. If the final multiplications both work on the $i$'th entries, they will be adding the same number to them, resulting again in vectors differing in their $ i $'th entries. If one or more of the final multiplications deals with another entry, then the final vectors will surely disagree in that entry. It is not possible in any case to end up with equal vectors,
 
 Now suppose that $ v_w^{n-1}$ and $v_u^{l-1} $ differ in two entries.
 If the final multiplications work on the same entry, then the final vectors will disagree in at least one entry.
 In the only remaining case, each one of the vectors is multiplied by a matrix updating a different one of the disagreeing entries. Let us represent the disagreeing entries of the vectors $ v_w^{n-1} $ and $ v_u^{n-1} $ by the pairs $(a, b)$ and $(c, d)$, respectively. Let $ x $ be the sum of the remaining $k-2$ entries in which the vectors agree. Without loss of generality, say that the entries become $(a, a+b+x)$ and $(c+d+x, d)$ after the final multiplication.
 But if the final vectors are equal, these pairs should also be equal, implying  $ c+b+2x=0 $, an impossibility. 
 
 We therefore conclude that it is not possible to have $ v_w=v_u $ for any string $ u $ of length at most $n$.
\end{proof}

Like in the binary case, given the vector representation  of a string, it is possible to reconstruct the string. The all-ones vector corresponds to the empty string. Any other vector $ v_w$ encoding a string $w$ of length $n$ in this encoding has a unique maximum entry, say at position $j$. Then $ w_n$ is $a_j $, and we obtain $ v_w^{n-1} $ by subtracting the sum of the other entries from the greatest entry. One repeats this procedure, reconstructing the string from right to left, until  one ends up with the all-ones vector. In terms of matrices, multiplications with the inverses of $ A_j $'s capture this process.  

\subsection{A Hierarchy Result}  
We will now use the generalized Stern-Brocot encoding to show a hierarchy result based on the dimension of the vector when an additional restriction is imposed on the matrices. 
 
\begin{theorem}\label{thm:hier}
Let $ S $ be the set of matrices whose entries belong to the set $  \{-m,-m+1,\dots,\allowbreak 0,\dots,m-1,m\} $ for some positive integer $m$, and let a \textup{DHVA($k$)} that is restricted to using members of $S$ during its transitions be denoted a \textup{DHVA$_S$($k$)}. Then $ \mathfrak{L}\textup{(DHVA}_S(\textit{k})) \subsetneq \mathfrak{L} \textup{(DHVA}_S(\textit{l}))$ for $ l>(km)^k $. 
\end{theorem}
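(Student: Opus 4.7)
The plan is to separate the two classes via a witness language built on the generalized Stern-Brocot encoding of Section \ref{sec:genstern}. Fix an alphabet $\Sigma_l = \{a_1, \ldots, a_l\}$ together with a fresh separator $\#$, and let
\[
\mathtt{L}_l \;=\; \{\, w \# w^r : w \in \Sigma_l^*\,\}.
\]
The easy half of the containment---that every $\text{DHVA}_S(k)$ is simulated by a $\text{DHVA}_S(l)$---I would handle by padding the state vector with $l-k$ entries permanently fixed to $1$ and extending each transition matrix by an identity block on the new coordinates; all additional entries lie in $\{0,1\} \subseteq S$, and the initial vector returns to its starting value iff the original $k$-dimensional block does.

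To see that $\mathtt{L}_l \in \mathfrak{L}(\text{DHVA}_S(l))$, I would design a machine that starts from the all-ones $l$-vector and, while scanning the prefix before $\#$, multiplies by $A_j$ whenever it reads $a_j$ (entries in $\{0,1\}$). Upon seeing $\#$ the control switches to a ``verification'' branch, and each subsequent $a_j$ triggers multiplication by $A_j^{-1}$ (entries in $\{-1,0,1\}$). Because inverses are applied in reverse order of the symbols read after $\#$, by Lemma \ref{lem:unique} the vector is restored to the all-ones vector exactly when the suffix equals $w^r$ for the $w$ scanned before the separator; the machine accepts in the verification state at the right endmarker precisely under this condition.

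The substance of the proof is showing $\mathtt{L}_l \notin \mathfrak{L}(\text{DHVA}_S(k))$ whenever $l > (km)^k$, which I would attack by a pigeonhole argument on reachable configurations. Assume for contradiction that some $\text{DHVA}_S(k)$ $\mathcal{V}$ with state set $Q$ and initial vector $v_0$ accepts $\mathtt{L}_l$. Clearing denominators of $v_0$ lets me assume the vector is integer-valued throughout, and one matrix multiplication increases each entry's absolute value by at most a factor of $km$. Thus, after reading any string of length $n$, the vector is one of at most $(2(km)^n V_0 + 1)^k$ integer vectors, where $V_0$ bounds the entries of the (scaled) initial vector, giving at most $|Q|\cdot(2(km)^n V_0 + 1)^k$ reachable configurations. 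Since $|\Sigma_l^n| = l^n$ and $l > (km)^k$, for sufficiently large $n$ the number of length-$n$ strings over $\Sigma_l$ exceeds the configuration bound, so two distinct $w_1, w_2 \in \Sigma_l^n$ must drive $\mathcal{V}$ to identical configurations. Then $\mathcal{V}$ behaves identically on $w_1 \# w_1^r$ and $w_2 \# w_1^r$ from the separator onward and returns the same verdict on both inputs, yet $w_1 \# w_1^r \in \mathtt{L}_l$ while $w_2 \# w_1^r \notin \mathtt{L}_l$---a contradiction. The main obstacle I anticipate is calibrating the counting so that the threshold matches $l > (km)^k$ exactly rather than something weaker: the comparison $l^n$ versus $|Q|\cdot(2(km)^n V_0 + 1)^k$ asymptotically in $n$ reduces precisely to $\log l > k\log(km)$, so the stated bound is the true pressure point of the argument.
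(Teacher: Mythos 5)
Your proposal is correct and follows essentially the same route as the paper: the witness language $\{w\#w^r\}$ over an $l$-letter alphabet, recognized in dimension $l$ by multiplying with the generalized Stern-Brocot matrices $A_j$ before the separator and with $A_j^{-1}$ after (with correctness resting on Lemma \ref{lem:unique} and the invertibility of the decoding product), combined with the same pigeonhole count of reachable configurations, bounded by $|Q|\cdot(2(km)^nV_0+1)^k$ against $l^n$ input strings, to rule out dimension $k$. Your explicit handling of the initial vector (clearing denominators and carrying the constant $V_0$) is in fact slightly more careful than the paper's, which tacitly bounds the initial entries by $m$; otherwise the two arguments coincide.
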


\begin{proof}
Using the generalized Stern-Brocot encoding, first we will show that it is possible to recognize $ \mathtt{MPAL}_l=\{w\#w^r|w\in\{a_1,a_2,\dots,a_l\}^*\} $ by a DHVA$_S$($ l $) $ \mathcal{V} $.
 
The input alphabet is $ \{a_1,a_2,\dots,a_l\} $,  and the corresponding matrices are $ \{A_1,A_2,\dots,A_l\}, $ described in Section \ref{sec:genstern}. Starting with the $ l $ dimensional vector of 1's,  $ \mathcal{V} $ encodes the string by multiplying its vector with the matrix $ A_j $ whenever it reads an $ a_j $ until it encounters a $ \# $ . After reading the $ \# $, $ \mathcal{V} $ starts decoding by multiplying the vector with matrix $ A_j ^{-1}$ whenever it reads an $ a_j $.

If the string is of the form $ w\# w^r $, the vector will be multiplied with the inverse matrices in the correct order and the resulting value of the vector will be 
$ [\begin{array} {rrr}
 1&1 & \dots 1 
\end{array}]$. 

We also need to show that the input string is not accepted when it is not of the form $ w\#w^r $. Consider an input string $ x\#y^r $ and suppose that it is accepted by $ \mathcal{V} $. Let $ v' $ denote the vector after reading $ x\# $ and let $ Y $ denote the product of the matrices the vector is multiplied while reading $ y^r $. Since the string is accepted, $ v'Y=[\begin{array} {rrr}
1&1& \dots 1 
\end{array}] $ must be true. Since the matrices $ A_j^{-1} $ are invertible, $ Y $ is also invertible, which implies that $ v' $ must be unique. Since $ y\#y^r \in \mathtt{MPAL}$, then $ v' $ must be the vector obtained after reading $ y $ . From Lemma \ref{lem:unique}, we know that every string has a unique representation and we conclude that $ x $ and $ y $ are identical.

We are now going to show that $ \mathtt{MPAL}_l  \notin \mathfrak{L}(\textup{DHVA}_S(k))$ for $ l>(km)^k $. We first note that the value of any entry of a vector of size $ k $ can be at most $ m^{n+1}k^n $ after reading $ n $ symbols. This is possible by letting the initial vector  have $m$ in all entries, and multiplying the vector with the matrix with all entries equal to $ m $  at each step. Similarly, the smallest possible value of an entry is $ -m^{n+1}k^n  $, and so the number of possible different values for a single entry is $ 2m^{n+1}k^n+1 $.  If the machine has $ s $  states, $ s(2m^{n+1}k^n+1)^k $ is an upper bound for the number of different vectors of size $k$ that can be reachable after reading $ n $ symbols. Since there are $ l^n $ strings of length $ n $ when the alphabet consists of $ l $ symbols, for large $ n $  and $ l >(km)^k $, the machine will end up in the same configuration after reading two different strings $ u $ and $ v $. This will cause the strings $ u\#v^r $ and $ v\#u^r $ which are not in  $ \mathtt{MPAL}_{l}$ to be accepted by the machine. Therefore, we conclude that $ \mathtt{MPAL}_{l} \notin \mathfrak{L}(\textup{DHVA}_S(k))$.

Since a vector automaton with a larger vector size can trivially simulate a vector automaton with a smaller vector size, the result follows.
\end{proof}

\section{Relationship with Counter Automata}

We are going to talk about the relationship between homing vector automata and counter automata. A real-time deterministic homing vector automaton with a vector of dimension two can simulate a real-time deterministic one counter automaton (D1CA) which accepts with the condition that the counter is empty (See the proof of Theorem \ref{thm:blind}). The fact that the individual entries of the vector can not be checked prevents us from simulating a real-time deterministic multicounter automaton.

In the following theorem, we show that a DBHVA(2) can recognize a language which is not recognizable by any multicounter machine and we conclude that the language recognition powers of homing vector automata and multi-counter machines are incomparable. 
Note that the result also implies the incomparability of $ \bigcup_k \mathfrak{L} \textup{(DHVA($k$))} $ and $ \bigcup_k\mathfrak{L}\textup{(D\textit{k}CA)} $. This is not the case for the blind versions, as we prove in the second part of the theorem. 

\begin{theorem}\label{th:last}
\begin{enumerate}[i.]
\item $ \bigcup_k \mathfrak{L}\textup{(DBHVA($ k $))} $ and $ \bigcup_k \mathfrak{L} \textup{(D$ k $CA)} $ are incomparable.
\item $ \bigcup_k\mathfrak{L}\textup{(D$ k $BCA)} \subsetneq \bigcup_k \mathfrak{L} \textup{(DBHVA($ k $))}. $
\end{enumerate}
\end{theorem}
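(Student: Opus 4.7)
For part i I would give separate witness languages in each direction. For $\bigcup_k \mathfrak{L}(\textup{DBHVA}(k)) \not\subseteq \bigcup_k \mathfrak{L}(\textup{D}k\textup{CA})$, I would take the marked palindrome language $\mathtt{MPAL}_2=\{w\#w^r : w\in\{a_1,a_2\}^*\}$. The DHVA construction inside the proof of Theorem \ref{thm:hier} (specialized to $l=2$) never consults its vector mid-input --- it just encodes before the $\#$ and decodes after --- so the very same automaton serves as a DBHVA($2$). To prove $\mathtt{MPAL}_2\notin\bigcup_k\mathfrak{L}(\textup{D}k\textup{CA})$ I would count configurations: a real-time D$k$CA with $s$ states has at most $s(2m+1)^k$ distinct configurations after $m$ input steps, since each counter changes by at most one per step. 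Because there are $2^n$ binary strings of length $n$ but only $s(2n+3)^k$ possible configurations after the prefix $w\#$, a pigeonhole argument forces two distinct $w_1\neq w_2$ of length $n$ to land the machine in identical configurations; determinism then equates the verdicts on $w_1\#w_1^r$ and $w_2\#w_1^r$, contradicting $w_2\#w_1^r\notin\mathtt{MPAL}_2$.

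For the opposite direction of part i, I would reuse $\mathtt{L}=\{a^n b^{a_1} a^{a_2} : n=a_1 \mbox{ or } n=a_1+a_2\}$ from Theorem \ref{thm:blind}, which already places $\mathtt{L}$ outside $\bigcup_k\mathfrak{L}(\textup{DBHVA}(k))$; a D$1$CA for $\mathtt{L}$ increments on each leading $a$, decrements on each $b$, branches its finite control on the zero-flag at the first symbol past the $b$-block to record whether $n=a_1$ held, decrements on each trailing $a$, and accepts iff either the recorded flag is set or the counter reads zero at the right endmarker. For part ii, the containment is a direct simulation: given a D$k$BCA, build a DBHVA($k+1$) with initial vector $[0,\dots,0,1]$ whose first $k$ entries track the counters and whose last entry is a fixed reservoir. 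The increment matrix for counter $i$ is the identity perturbed by a $+1$ at position $(k+1,i)$, which sends $v_i\mapsto v_i+v_{k+1}=v_i+1$ while leaving every other entry (including the reservoir) unchanged; the decrement matrix flips this sign. At the right endmarker the vector equals the initial vector iff all counter coordinates vanish, which matches the D$k$BCA acceptance rule. Strictness then follows from $\mathtt{MPAL}_2\in\mathfrak{L}(\textup{DBHVA}(2))$ together with the same pigeonhole argument --- applied a fortiori to blind multi-counter machines --- which also excludes $\mathtt{MPAL}_2$ from $\bigcup_k\mathfrak{L}(\textup{D}k\textup{BCA})$.

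The step requiring the most care is the pigeonhole lower bound on $\mathtt{MPAL}_2$, specifically the claim that when two prefixes $w_1\#$ and $w_2\#$ bring the machine into the \emph{same complete configuration} (same state and identical counter values), the machine's behaviour on every suffix must coincide. This is automatic for the blind model, whose transitions ignore counter values; for the non-blind D$k$CA it uses that the zero-flags driving $\delta$ are themselves functions of the counter values, so equal configurations yield equal flags at every subsequent step. Once that cornerstone is established, the remaining components --- the Stern-Brocot DBHVA for $\mathtt{MPAL}_2$, the D$1$CA for $\mathtt{L}$, and the matrix simulation of a D$k$BCA --- are essentially mechanical verifications.
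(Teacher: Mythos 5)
Your proof is correct, and it reaches the same conclusions as the paper, but by a genuinely different route in both directions of part i. For $\mathtt{MPAL}_2\notin\bigcup_k\mathfrak{L}(\textup{D}k\textup{CA})$ the paper does not argue from scratch: it cites Petersen's result that no $k$-counter machine running in time $O(2^{n/k})$ recognizes $\mathtt{MPAL}_2$ and specializes it to real time. Your self-contained pigeonhole count --- at most $s(2m+1)^k$ configurations after $m$ real-time steps against $2^n$ prefixes $w\#$, with the observation that equal configurations force equal zero-flags and hence equal continuations --- proves exactly the real-time statement needed, and in fact mirrors the counting argument the paper itself deploys inside Theorem~\ref{thm:hier}; what you lose is the stronger time bound, which is irrelevant here. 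For the direction $\bigcup_k\mathfrak{L}(\textup{D}k\textup{CA})\not\subseteq\bigcup_k\mathfrak{L}(\textup{DBHVA}(k))$ the paper uses the unary witness $\mathtt{UGAUSS}=\{a^{n^2+n}\mid n\in\mathbb{N}\}$, recognized by a D2CA, and excludes it from all DBHVA($k$) via Theorem~\ref{thm:unary} (unary DHVA languages are regular); you instead reuse the language $\mathtt{L}$ of Theorem~\ref{thm:blind}, whose non-membership in $\bigcup_k\mathfrak{L}(\textup{DBHVA}(k))$ is already established there, and supply a one-counter machine for it. Both witnesses are valid; yours stays entirely within results proved in the paper and needs only one counter, while the paper's choice additionally exhibits a unary separation. (Do note the degenerate parses of $\mathtt{L}$, e.g.\ inputs with an empty $b$-block, which your D1CA sketch glosses over; these are fixable by routine finite-control bookkeeping.) Part ii of your argument --- the $(k+1)$-dimensional blind simulation with a constant reservoir entry driving the counter updates, and $\mathtt{MPAL}_2$ as the strictness witness --- coincides with the paper's proof up to the inessential choice of initial vector.
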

\begin{proof}
\label{ap:counter}
 \textit{i.} We know that $ \mathtt{MPAL}_2=\{w\#w^r|w\in\{0,1\}^*\} $ can be recognized by a DBHVA(2) by Theorem \ref{thm:hier}. In \cite{Pe11}, it is proven that no counter machine with $ k $ counters operating in time $ O(2^{n/k}) $ can recognize $ \mathtt{MPAL}_2 $. Since we are working with real-time machines, the result follows.
 
 On the other hand, it is known that the nonregular unary language $ \mathtt{UGAUSS}=\{a^{n^2+n} | n \in \mathbb{N} \}$ can be recognized by a D2CA \cite{SYS13}. By Theorem \ref{thm:unary}, we know that DHVA($ k $)'s and inherently DBHVA($ k $)'s can recognize only regular languages in the unary case. Hence, we conclude that the two models are incomparable.
 
 \noindent \textit{ii.} Let us simulate a given D$ k $BCA $ \mathcal{M} $ by a DBHVA($ k+1 $). Let  $ [\begin{array} {rrrr}
 1&1&\dots & 1
 \end{array}] $
 be the initial vector of $ \mathcal{V} $. $ k+1 $'st entry of the vector will remain unchanged throughout the computation which will allow the counter updates. At each step of the computation, $ \mathcal{V} $ will multiply the vector with the appropriate matrix $ M\in S $ where $ S $ is the set of all $ (k+1)\times (k+1) $ matrices corresponding to possible counter updates. Since each counter can be decremented, incremented or left unchanged, $ |S|=3^k $. All matrices will have the property that $ M(i,i)=1  $ and $ M(k+1,k+1)=1 $. When the $ i $'th counter is incremented and decremented, then  $ M(k+1,i)=1 $ and $ M(k+1,i)=-1 $, respectively. At the end of the computation, the input will be accepted if the vector is equal to $ [\begin{array} {rrrr}
 1&1&\dots & 1
 \end{array}] $, which happens iff all counters have value 0.
   
 The inclusion is proper by the witness language $ \mathtt{MPAL}_2 $.
\end{proof}

We have mentioned that deterministic blind homing vector automaton can recognize the language  $ \mathtt{MPAL}_2 $ which is not recognizable by any counter machine. Consider the language  $ \mathtt{POW}=\{a^nb^{2^n}|n\geq 0\} $, whose Parikh image is not semilinear, which proves that the language is not context-free. Let us note that it is also possible to recognize $ \mathtt{POW} $ by a DBHVA(3) by using the same idea in the proof of Theorem \ref{thm:upow}.  

\section{Open Questions}\label{sec:end}
We focused on real-time computation throughout the paper. What is the power of one-way homing vector automata that are allowed to pause for some steps during their left-to-right traversal of the input string?

Can we show a separation result between the class of languages recognized based on the set of matrices used during the transitions of a homing vector automaton? Most of the homing vector automata we constructed in  the paper are restricted to using matrices whose entries belong to the set $ \{-1,0,1\} $. Is it possible to recognize, for instance, the language $ \mathtt{POW_r}=\{a^{2^n}b^n|n\geq 0\} $ when the matrix entries are restricted to this set? Note that it is possible to construct a DBHVA(2) recognizing $ \mathtt{POW_r} $ with the initial vector 
$  \left [
\begin{array}{rr}
  0&1
    \end{array} 
       \right ] $ 
    and the matrices $$ M_a=\left [
  \begin{array}{rr}
  1&0\\
  1&1\\
  \end{array}
  \right ] \mbox{ and } M_b=\left [
    \begin{array}{rr}
    \frac{1}{2}&0\\
    0&1\\
    \end{array}
    \right ] .$$
    
Can we show a hierarchy result between the class of languages recognized by a homing vector automaton of dimension $ k $ and $ k+1 $ for some $ k>1 $ when the matrix entries are restricted to the set $ \{-1,0,1\} $? Consider the family of languages $ \mathtt{POW}(k)=\{ a^{k^n}b^n|n \geq 0\} $. We conjecture that it is not possible to recognize $ \mathtt{POW}(k)$ with a homing vector automaton of dimension less than $ k+1 $ with the restricted set of matrices.

What can we say about the relationship between homing vector automata and ordinary vector automata? The definition of the vector automaton allows multiplication by a matrix while processing the right end-marker, whereas this is not the case for the homing vector automaton, which makes the comparison between the two models difficult. Would the additional capability of multiplication on the right end-marker increase the computational power of homing vector automata?   

\section*{Acknowledgements}
We thank Ryan O'Donnell and Abuzer Yakary{\i}lmaz for their helpful answers to our questions, and the anonymous reviewers for their constructive comments.

\bibliographystyle{splncs03}
\bibliography{YakaryilmazSay}

\end{document}